\newtheorem{theorem}{Theorem}
\newtheorem{remark}{Remark}
\newtheorem{definition}
{Definition}
\newtheorem{assumption}{Assumption}
\begin{document}

\title{An Integer Clustering Approach for Modeling Large-Scale EV Fleets with Guaranteed Performance}

\author{Sijia Geng,\quad \, 
Thomas Lee, \quad
Dharik Mallapragada,\quad 
Audun Botterud\quad
\thanks{Sijia Geng is with the Department of Electrical and Computer Engineering,
Johns Hopkins University, Baltimore, MD 21218, USA.}% <-this % stops a space
\thanks{Thomas Lee is with the MIT Institute for Data, Systems, and Society,
Massachusetts Institute of Technology,
Cambridge, MA 02139, USA.}% <-this % stops a space
\thanks{Dharik Mallapragada is with the MIT Energy Initiative, Massachusetts Institute of Technology, Cambridge, MA 02139, USA.}% <-this % stops a space
\thanks{Audun Botterud is with the MIT Laboratory for Information and Decision Systems, Massachusetts Institute of Technology, Cambridge, MA\,02139, USA.}% <-this % stops a space
\thanks{ 
Corresponding Author: Sijia Geng (email: sgeng@jhu.edu).}% <-this % stops a space
% \thanks{ 
% Current version October 1, 2023.}  % <-this % stops a space
}

% \author{\IEEEauthorblockN{Sijia Geng\IEEEauthorrefmark{1},
% Thomas Lee\IEEEauthorrefmark{2},
% Dharik Mallapragada\IEEEauthorrefmark{3} and 
% Audun Botterud\IEEEauthorrefmark{4}}
% \IEEEauthorblockA{\IEEEauthorrefmark{1} Department of Electrical and Computer Engineering,
% Johns Hopkins University,
% Baltimore, MD 21218, USA}
% \IEEEauthorblockA{\IEEEauthorrefmark{2} MIT Institute for Data, Systems, and Society,
% Massachusetts Institute of Technology,
% Cambridge, MA 02139, USA}
% \IEEEauthorblockA{\IEEEauthorrefmark{3} MIT Energy Initiative, Massachusetts Institute of Technology, Cambridge, MA 02139, USA}
% \IEEEauthorblockA{\IEEEauthorrefmark{4} Laboratory for Information and Decision Systems, Massachusetts Institute of Technology, Cambridge, MA 02139 USA}
% }

% The paper headers
% \markboth{Journal of \LaTeX\ Class Files,~Vol.~14, No.~8, August~2021}%
% {Shell \MakeLowercase{\textit{et al.}}: A Sample Article Using IEEEtran.cls for IEEE Journals}

% \IEEEpubid{0000--0000/00\$00.00~\copyright~2021 IEEE}
% Remember, if you use this you must call \IEEEpubidadjcol in the second
% column for its text to clear the IEEEpubid mark.

\maketitle

\begin{abstract}
Large-scale integration of electric vehicles (EVs) leads to a tighter integration between transportation and electric energy systems. In this paper, we develop a novel integer-clustering approach to model a large number of EVs that manages vehicle charging and energy at the fleet level yet maintain individual trip dispatch. The model is then used to develop a spatially and temporally-resolved decision-making tool for optimally
planning and/or operating EV fleets and charging infrastructure. 
The tool comprises a two-stage framework where a tractable disaggregation step follows the integer-clustering problem to recover an individually feasible solution. 
Mathematical relationships between the integer clustering, disaggregation, and individual formulations are analyzed. We establish theoretical lower and upper bounds on the true individual formulation which underpins a guaranteed performance of the proposed method. The optimality accuracy and computational efficiency of the integer-clustering formulation are also numerically validated on a real-world case study of Boston's public transit network under extensive test instances. Substantial speedups with minimal loss in solution quality are demonstrated.
\end{abstract}

\begin{IEEEkeywords}
Electric vehicles (EVs), commercial fleets, smart charging, integer clustering formulation, theoretical bounds, mixed integer linear programming (MILP).
\end{IEEEkeywords}

\section{Introduction}
   \label{sec:intro}
%%%%%%%%%%%%%%%%%%%%%%%%%%%%%%%%%%%%%%%%%%%%%%%%%%%%%%%%%%%%%%%%%%%%%%%%%%%%%%
The increasing adoption of electric vehicles (EVs) will lead to a tighter integration between transportation and electric energy systems due to the large amount of flexible loads introduced by EVs. EV aggregators and owners of commercial EV fleets can control the charging and operation of the vehicles to respond to time-of-use electricity price signals, achieve peak shaving, or provide ancillary services to the grid \cite{ferc2222}.
While there are extensive studies on the grid impacts of added electrical loads from light-duty EVs in residential settings \cite{BURNHAM2017237,mai2018electrification}, the grid impacts of charging infrastructure for heavy-duty EV fleets are less studied. In this paper, we focus on modeling a large number of commercial EV fleets, such as public buses, school buses, and delivery freight, and develop spatially and temporally-resolved decision-making tools for deploying and operating EVs and the charging infrastructure.

Previous research has looked into the operational problem of managing EV fleets. Reference \cite{al2023multi} proposed a multi-battery flexibility model that uses a few virtual batteries to conservatively estimate the aggregate flexibility set of a large number of EVs. The virtual batteries can be identified through a clustering approach to reflect the various geometric shapes of the individual EV flexibility sets. The formulation of this work is not applicable to planning settings though because the total number and composition of the individual EVs are given parameters. 
Reference \cite{botkin2021distributed} studied the problem of coordinating EV charging 
by distributed control. 
A model for aggregated fleet state-of-charge dynamics was presented. However, the model considered the trip assignments as input parameters instead of decision variables, and the vehicle composition was homogeneous. 
On the other hand, the planning problem for EVs and charging infrastructure has received relatively less attention. 
Reference \cite{pham2022techno} quantified the value of distributed energy resources for meeting the energy demand of heavy-duty EVs, though, taking the electricity load profiles from EV charging as fixed input parameters. 
Reference \cite{li2019mixed} studied an EV fleet scheduling problem incorporating the fleet sizing decision but also uses a heuristic assumption of a fixed charging rule and discretized energy levels. 

To address the gaps in existing literature, in this paper, we develop a novel approach, referred to as the integer-clustering formulation, to efficiently model a large number of EV fleets and their charging and dispatching. 
The formulation clusters the vehicles and chargers by their operating characteristics into a few types, and manages the aggregated energy for each vehicle type and the aggregated charging power by the vehicle-charger type. The model is then used for developing an optimization framework.
The proposed framework optimizes EV charging strategies and dispatching which makes it practically useful for both planning design and as an operational decision-making tool.
The integer-clustering formulation is able to effectively reduce the computational complexity due to the fleet-level management for EV charging. Note that the model is still able to handle vehicle dispatch on the individual vehicle level. Most importantly, from the analytical perspective, we provide theoretical guarantees for the performance of the proposed method by showing a lower and upper bound to the true individual formulation. 
The performance of the proposed framework is demonstrated using a real-world case study based on Boston's public transit bus network. Real geospatial timetable data set for bus schedules and actual cost parameters are used in the case study to provide insightful guidelines for future low-emission electrified transportation systems. 
%%%%%%%%%%%%%%%%%%%%%%%%%%%%%%%%%%%%%%%%%%%%%%%%%%%%%%%%%%%%%%%%%%%%%%%%%%%%%%

\section{Integer-Clustering Fleet Formulation}\label{sec:models_integer} 
%%%%%%%%%%%%%%%%%%%%%%%%%%%%%%%%%%%%%%%%%%%%%%%%%%%%%%%%%%%%%%%%%%%%%%%%%%%%%%
In this section, we present the novel integer-clustering formulation that models fleet-level EV charging to reduce the computational requirement compared to vehicle-level management while still maintaining individual vehicle dispatch. Similarly, the chargers are managed at the group level, only differentiated by charger types.

The trip schedule demand is assumed to be deterministic. This assumption is viable because we focus on commercial EV fleets in this work. The term ``trip block'', or simply, ``block'', refers to a collection of interconnecting trips which in total begins and ends at the charging station (i.e. depot), and are fulfilled by a single vehicle. Note that all EV chargings take place at the depot.
We assume that there are in total $K$ trip blocks in the schedule table, $I$ vehicle types, $J$ charger types, and $S$ representative days. Within each representative day $s$ there are $T_{\text{d}}$ time intervals with indices $\{\underline{\tau}_s,...,\overline{\tau}_s\}$, and across the concatenated representative days there are in total $T$ time intervals, i.e., $T=S T_{\text{d}}$. 

Some important parameters used in the formulation are introduced here.
$D\in \mathbb{R}^K$ gathers the total travel distance of each trip block. 
$A\in\{0,1\}^{K\times T}$ denotes the en route (1) or idle (0) status for each trip block.
$U\in\{0,1\}^{K\times T}$ gathers the time of leaving the charging station for each block, that is, its value equals 1 at the block’s starting time and 0 elsewhere. 
On the other hand, $V\in\{0,1\}^{K\times T}$ gathers the time of returning to the charging station for each block, that is, its value equals 1 at the time immediately \textit{after} the block’s end time, and 0 elsewhere.
$P_{ij}$ denotes the charging power capacity in kW for type $i$ vehicle using type $j$ charger.
$R_{i}$	is the vehicle energy capacity in kWh for type $i$ vehicle. 

\begin{figure}[t]
\begin{center}
\begin{picture}(245.0, 100.0)
%%%
\put(  -7,  0){\epsfig{file=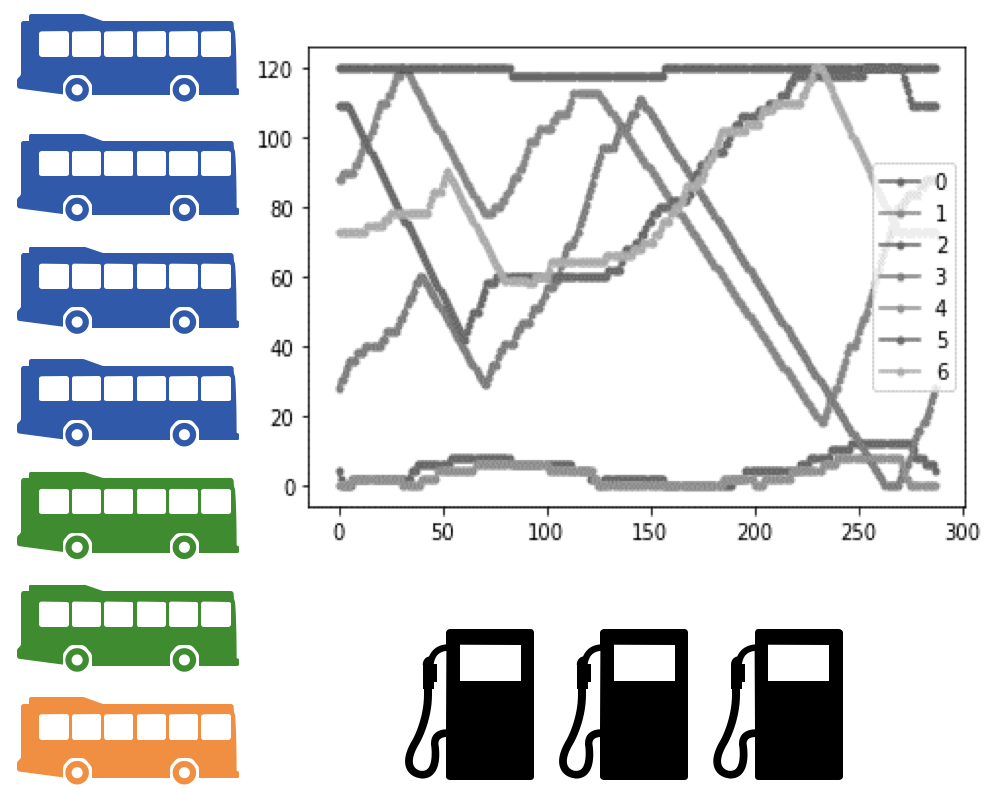,height=.190\textwidth}}  %%%
\put(  124,  0){\epsfig{file=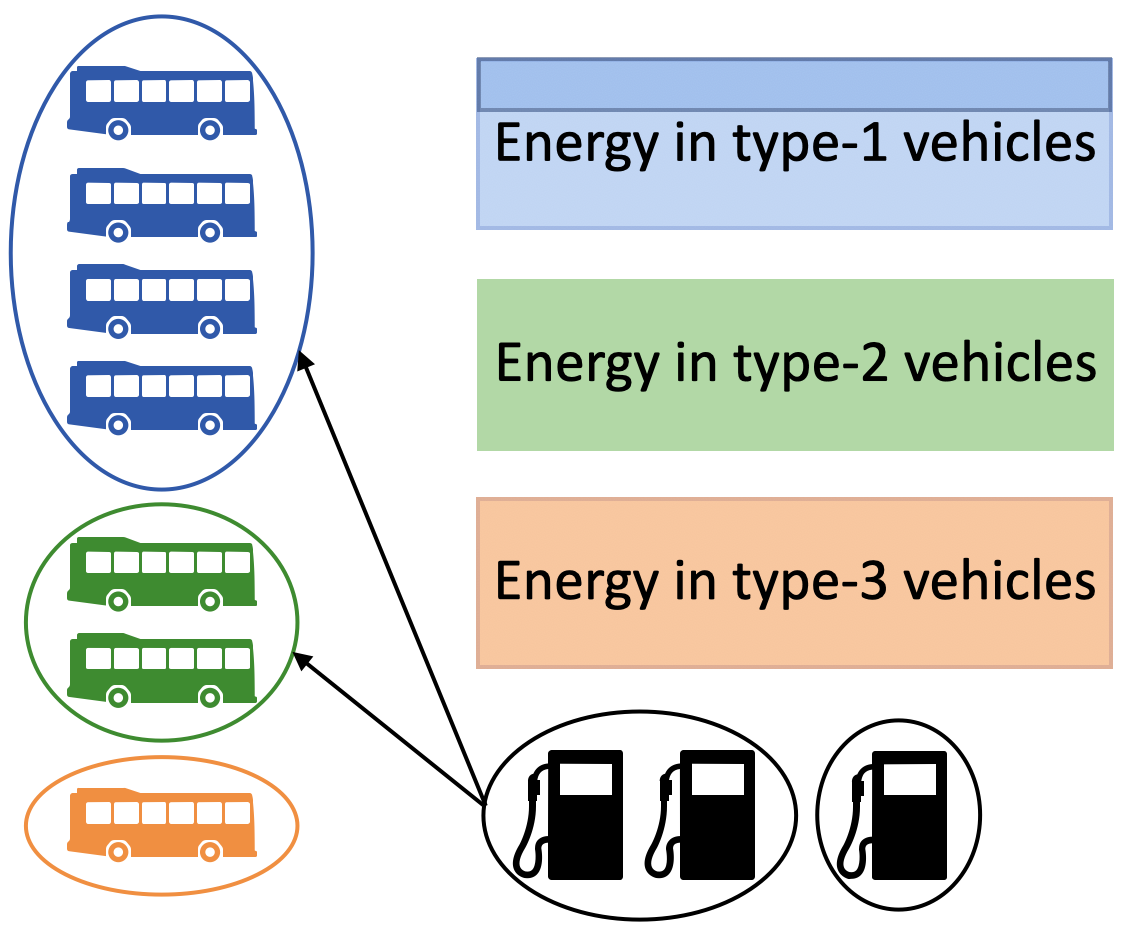,height=.196\textwidth}}  %%%
\put( 105, 5){\small{(a)}}
\put( 235, 5){\small{(b)}}
\end{picture}
\end{center}
      \caption{\small{(a) Conventional formulation: Track charging and state-of-energy for individual vehicles and chargers. (b) Integer-clustering formulation: Group vehicles and chargers by type, and manage charging and SOE at the group level.}}
      \label{fig:integer_clustering}
       \vspace{-0.05cm}
\end{figure}

The fleet-level vehicle dynamics are modeled as follows, 
\begin{align}
&\sum_{i} b_{ki}=1, \ \ b_{ki}\ \text{binary}, \ \forall k, \label{eq_block}\\
  &n_{i}(t) = N_{i}^\text{v}-\sum_{k}(A_{k}(t)b_{ki}),\ \forall i,t,\label{eq_depot}\\
  &\sum_{j} m_{ij}(t)  \leq n_{i}(t),\ \forall i,t, \label{eq_vehicle_depot}\\
   &\sum_{i} m_{ij}(t)  \leq N_{j}^\text{c},\ \forall j,t,\label{eq_vehicle_charging}\\
&n_{i}(t) \geq 0,\ n_{i}(t)\ \text{integer}, \ \ \forall \ i,t,\label{eq_nonneg_n}\\
&m_{ij}(t) \geq 0,\ \forall i,j,t. \ \label{eq_nonneg_m} \\
& N^{\text{v}}_i \geq 0, \ N^{\text{v}}_i\ \text{integer} ,\ \forall i, \ \label{eq_number_Nv}\\
 & N^{\text{c}}_j \geq 0, \ N^{\text{c}}_j\ \text{integer} ,\ \forall j.\label{eq_number_Nc}
\end{align}

Here, $b_{ki}$ is a binary variable that denotes whether block $k$ is assigned to vehicle type $i$, and \eqref{eq_block} ensures that each block is assigned to exactly one vehicle type. \eqref{eq_depot} enforces that $n_{i}$, the number of type $i$ vehicles at the depot, equals $N_{i}^\text{v}$, the total number of type $i$ vehicles, minus those that are en route. The variable $m_{ij}$ denotes the number of type $i$ vehicles that are being charged by type $j$ chargers. 
\eqref{eq_vehicle_depot} requires the number of type $i$ vehicles that are actively charging at the depot to be upper bounded by the number of type $i$ vehicles at the depot.  \eqref{eq_vehicle_charging} requires that the total number of vehicles that are charging using type $j$ chargers is upper bounded by $N_{j}^\text{c}$, the number of type $j$ chargers. 
Non-negativity constraints \eqref{eq_nonneg_n}-\eqref{eq_number_Nc} are enforced. Note that $m_{ij}$ is modeled as a continuous variable to allow for the flexibility of switching the usage of chargers among vehicle types within a time interval.

The following constraints describe fleet-level energy management and individual vehicle dispatch. The variable $soe_{i}$, referred to as the state-of-energy (SOE), denotes the combined energy stored in all type $i$ vehicles that are parked at the depot; $soe_i(t)$ refers to the state at the beginning of interval $t$. $p^{\text{v}}_{i}$ is the combined charging power for type $i$ vehicles at the depot, $d_{ki}$ denotes the energy stored in type $i$ vehicles that is prepared for block $k$, and $\eta^{\text{v}}_i$ is the parameter for driving efficiency in kWh/km for vehicle type $i$. 
\begin{align}
    &soe_{i}(t+1)  = soe_{i}(t)  + p^{\text{v}}_{i}(t)\Delta T +  \nonumber\\
    & \hspace{0.1cm}\sum\nolimits_{k} \big[-U_{k}(t) d_{ki} + V_{k}(t) \big(d_{ki} - D_{k} b_{ki} \eta^{\text{v}}_i \big)\big],\quad \nonumber\\
    & \hspace{0.8cm}\forall \ i,s,t\in [\underline{\tau}_s, \overline{\tau}_s - 1], \label{eq_vehicle_soe}\\
    &0 \leq p^{\text{v}}_{i}(t)  \leq \sum_{j} ( P_{ij} m_{ij}(t)),\ \forall i,t,\label{eq_charge_power}\\
     &soe_{i}(\underline{\tau}_s)  = soe_{i}(\overline{\tau}_s)  + p^{\text{v}}_{i}(\overline{\tau}_s)\Delta T +  \nonumber\\
    & \hspace{0.1cm}\sum\nolimits_{k} \big[-U_{k}(\overline{\tau}_s) d_{ki} + V_{k}(\overline{\tau}_s) \big(d_{ki} - D_{k} b_{ki} \eta^{\text{v}}_i\big)\big]
     ,\ \forall i,s, \label{eq_vehicle_soe_ic}\\
    &0  \leq soe_{i}(t)  \leq R_{i} n_{i}(t),\ \forall i,t. \label{eq_vehicle_soe_limit}
\end{align}
Equation~\eqref{eq_vehicle_soe} states that the variation in the aggregated SOE is driven by the charging power $p^{\text{v}}_{i}$, the decrease caused by a vehicle leaving the depot, and the increase due to a vehicle returning to the depot carrying unused surplus energy. \eqref{eq_charge_power} ensures that $p^{\text{v}}_{i}$ is less or equal to the charging rate times the number of vehicles under each charger type. For the sake of continuous operation, \eqref{eq_vehicle_soe_ic} requires that the combined energy at the end of the representative day returns to its initial value. \eqref{eq_vehicle_soe_limit} is a physical limit that makes sure the aggregated SOE is upper bounded by the vehicle energy capacity times the number of type $i$ vehicles at the depot. 

We are yet to account for the variable $d_{ki}$ in \eqref{eq_vehicle_soe}. Two variants of the formulation are provided. The first formulation allows the vehicle to carry surplus energy, i.e. more than what is needed by the impending trip block,
\begin{align}
    &D_{k} b_{ki} \eta^{\text{v}}_i  \leq d_{ki} \leq R_{i} b_{ki},\ \forall i,k,  \quad \text{(less restrictive).}\label{eq_lower}
\end{align}
In contrast, the second formulation requires the vehicle being dispatched to cover a trip block to carry exactly the same amount of energy needed for the trip, 
\begin{align}
     &D_{k} b_{ki} \eta^{\text{v}}_i  =d_{ki} \leq R_{i} b_{ki},\ \forall i,k,  \quad \text{(more restrictive).}\label{eq_upper}
\end{align}
Note that if $b_{ki} = 0$, that is, type $i$ vehicle is not assigned to cover block $k$, then $d_{ki}$ is forced to be zero.
\begin{remark}
    It is clear that \eqref{eq_upper} is more restrictive than in reality where surplus energy is allowed, while \eqref{eq_lower} is less restrictive than in reality because in this formulation the surplus energy can be redistributed among the vehicle fleet upon a vehicle returning to the depot.
\end{remark}
%%%%%%%%%%%%%%%%%%%%%%%%%%%%%%%%%%%%%%%%%%%%%%%%%%%%%%%%%%%%%%%%%%%%%%%%%%%%%%

\section{Individual Vehicles Formulation}\label{sec:models_individual}
%%%%%%%%%%%%%%%%%%%%%%%%%%%%%%%%%%%%%%%%%%%%%%%%%%%%%%%%%%%%%%%%%%%%%%%%%%%%%%
Given the decision (of either a planning or operation problem) using the integer-clustering formulation, we are interested in determining: 1) if there exists an exact disaggregation for individual vehicles that, when aggregated, match the charging and dispatching operation determined by the integer-clustering formulation; 2) if so, what is the exact disaggregation and how does it compare to re-optimizing operation (while keeping the investment plan) during disaggregation; and 3) if not, to what extent the integer-clustering formulation matches that of the aggregated charging and dispatching of the individual formulation. 
To facilitate the discussion, we make the following definition.
\begin{definition}
     Consider any pair of trip blocks, indexed by $k_1$ and $k_2$. They are said to be compatible with each other if either: block $k_1$’s ending time is less than or equal to $k_2$’s starting time, or block $k_2$’s ending time is less than or equal to $k_1$’s starting time. \label{definition_compatible} 
     % {\color{cyan} Does the description comply with the V definition?}
\end{definition}

In addition to the variables and parameters introduced earlier, the following sets of variables are defined for type $i$ vehicles. $y^{i}_{v}$ indicates whether the vehicle indexed by $(i,v)$, that is, vehicle $v$ in type $i$, is purchased. $bb^{i}_{kv}$ denotes whether block $k$ is assigned to vehicle $v$ (in type $i$ vehicles), $pp^{\text{v},i}_{vj}(t)\in \mathbb{R}_{+}$ denotes the charging power of vehicle $v$ using a type $j$ charger at time $t$, $dd^{i}_{kv}$ denotes the energy carried in vehicle $v$ for block $k$, and $soev^{i}_{v}(t)\in \mathbb{R}_{+}$ denotes the stored energy of vehicle $v$ at time $t$ while it is at the depot. Besides, define $x^i_{vj}(t)$ as continuous variables indicating the time fraction during interval $t$ when vehicle $v$ is charging using type $j$ charger. 

\begin{remark}
The modeling choice of continuous $x^i_{vj}$ (and continuous $m_{ij}$ in Section~\ref{sec:models_integer}) implicitly allows the same charger to be used across different vehicles and vehicle types within the same time period, thereby mitigating the impact of time resolution choice on model fidelity.
\end{remark}

If a vehicle indexed by $(i,v)$ is assigned for any blocks, then its purchase status must be on and contribute to the fleet size, 
\begin{align} 
    bb^i_{kv} &\leq y^i_{v} \leq 1,\ \ bb^i_{kv},y^i_{v}\ \text{binary}, \ \forall i,k,v, \label{eq_new_indi}\\
    \sum_v y^i_{v} &\leq N^{\text{v}}_i  , \ \forall i,v. \label{eq_new2_indi} 
\end{align}

One vehicle cannot cover two blocks at any time $t$, that is, blocks that are assigned to the same vehicle should be compatible. Recall that matrix $A$ denotes the en route or idle status for each trip block. We have, 
\begin{align}
    &\sum_{k}(A_{k}(t)bb^{i}_{kv})\leq 1,\ \forall i,v,t. \label{eq_compatible_disag}
\end{align}

Vehicles cannot charge en route,
\begin{align}
    \sum_j x^i_{vj}(t) \leq 1 - \sum_k A_k(t) bb^i_{kv}, \  \forall i,v,t. \label{eq_enroute_disag}
\end{align}

During any time period, the total amount of ``charging fraction'' utilized by vehicles must be upper bounded by the number of chargers of type $j$,
\begin{align}
\sum_{i,v} x_{vj}^i(t) \leq N^{\text{c}}_j , \ \forall j. \label{eq_x_p4}
\end{align}

The dynamic of the state-of-energy (while at the depot) for individual vehicles is given by,
\begin{align}
    soev^{i}_{v}(t+1)  &= soev^{i}_{v}(t)  + \sum_{j} pp^{\text{v},i}_{vj}(t) \Delta T +  \nonumber\\
    & \hspace{-0.3cm}  \sum_k \big[-U_{k}(t)
 dd^{i}_{kv} + V_{k}(t) \big( dd^{i}_{kv} - D_{k} bb^{i}_{kv}\eta^{\text{v}}_i\big) \big], \nonumber\\
 & \hspace{1cm}\forall \ i,v,s,t\in [\underline{\tau}_s, \overline{\tau}_s-1].\label{eq_vehicle_soe_disag}
\end{align}

The charging rate and stored energy for individual vehicles should be bounded by their capacity. And the boundary condition is enforced to ensure continuity of operation,
\begin{align}
    &0\leq pp^{\text{v},i}_{vj}(t) \leq P_{ij}x^i_{vj}(t),\ \forall i,j,v,t, \label{eq_charge_power_disag}\\
    &0\leq soev^{i}_{v}(t) \leq R_{i}, \forall i,v,t, \label{eq_soev_disag}\\
    &soev^{i}_{v}(\underline{\tau}_s) = soev^{i}_{v}(\overline{\tau}_s)  + \sum\nolimits_{j} pp^{\text{v},i}_{vj}(\overline{\tau}_s) \Delta T +  \nonumber\\
    & \hspace{0.3cm} \sum\nolimits_{k} \big[-U_{k}(\overline{\tau}_s)
 dd^{i}_{kv} + V_{k}(\overline{\tau}_s) \big( dd^{i}_{kv} - D_{k} bb^{i}_{kv}\eta^{\text{v}}_i\big) \big].
 \label{eq_soev_boundary_disag}
\end{align}
\begin{remark}
    Recall that in the integer-clustering formulation, similar capacity constraints, \eqref{eq_charge_power}-\eqref{eq_vehicle_soe_limit}, were defined at the fleet level for each vehicle type. Those constraints are less restrictive than \eqref{eq_charge_power_disag}-\eqref{eq_soev_boundary_disag} and may represent fictitious ``super charging'' due to the aggregation effect. Therefore, a feasible solution for the integer-clustering formulation may not necessarily have a feasible disaggregation. \label{remark_super_charging}
\end{remark}

We account for the variable $dd^{i}_{kv}$ in \eqref{eq_vehicle_soe_disag} in a similar manner as in the integer-clustering formulation by giving two variants. The first formulation allows the vehicle to carry surplus energy than needed by the impending trip block,
\begin{align}
    &D_{k} bb^{i}_{kv} \eta^{\text{v}}_i  \leq dd^{i}_{kv} \leq R_{i} bb^{i}_{kv},\ \forall i,k,v , \ \text{(less restrictive),}\label{eq_lower_disag}
\end{align}
while the second formulation requires the vehicle being dispatched to cover a trip block to carry exactly the same amount of energy needed for the trip, 
\begin{align}
     &D_{k} bb^{i}_{kv} \eta^{\text{v}}_i  =dd^{i}_{kv} \leq R_{i} bb^{i}_{kv},\ \forall i,k,v,  \ \text{(more restrictive).}\label{eq_upper_disag}
\end{align}
%%%%%%%%%%%%%%%%%%%%%%%%%%%%%%%%%%%%%%%%%%%%%%%%%%%%%%%%%%%%%%%%%%%%%%%%%%%%%%

\section{Fleet-Individuals Operational Matching}\label{sec:models_disaggregate}
%%%%%%%%%%%%%%%%%%%%%%%%%%%%%%%%%%%%%%%%%%%%%%%%%%%%%%%%%%%%%%%%%%%%%%%%%%%%%%
The following constraints are geared towards finding the exact disaggregation that matches the operational profiles of the integer-clustering solution.
Note that to find the exact disaggregation (as compared to re-optimizing the vehicle operation), there is no interplay across different vehicle types. As a result, the problem can be perfectly decomposed by considering different vehicle types separately. 
The superscript $i$ is added only to indicate that the problem is for type-$i$ fleet.

Assuming that we already obtained the results of the integer-clustering formulation (for either the planning or operation problems), when we disaggregate, every block that is assigned to vehicle type $i$ needs to be covered by one and only vehicle in type $i$,
\begin{align}
    &\sum_{v} bb^{i}_{kv} = b_{ki}, \ \forall k.  \label{eq_block_disag}
\end{align}

The amount of energy carried for block $k$ aggregates to the value assigned to the type-$i$ fleet,
\begin{align}
   \sum_v dd^{i}_{kv}  = d_{ki}, \ \forall k. \label{eq_block_energy_disag}
\end{align}

The number of type $j$ chargers being used for all type $i$ vehicles at any time $t$ should match the total number found in the integer-clustering problem,
\begin{align}
    \sum_v x^i_{vj}(t) = m_{ij}(t), \  \forall j,t.\label{eq_vehicle_charging_disag}
\end{align}  
Note that there is no need to index the chargers; we simply need to utilize them based on their availability, using the first one that's accessible. 

The sum of the charging power for all the type $i$ vehicles should match the total charging power for this vehicle type found in the integer-clustering problem,
\begin{align}
& \sum_{v,j} pp^{\text{v},i}_{vj}(t) = p^{\text{v}}_{i}(t),\ \forall t.\label{eq_charge_power_pp_disag}
\end{align}
Similarly, the sum of the stored energy should match the total aggregated energy for this vehicle type,
\begin{align}
& \sum_{v} soev^{i}_{v}(t) =  soe_{i}(t), \ \forall t.\label{eq_vehicle_soe_limit_disag}
\end{align}

\section{Energy System Model}
% We need to connect the energy model at the depot with the EV fleet model.
The interaction between the charging infrastructure at the depot and the electric distribution grid is described in a simplified manner, as outlined below.  
The electric power balance equation is given by, 
\begin{align}
   &\quad {p}_\text{g}(t) = \sum_{i} p^{\text{v}}_i(t)\ \  (\text{or,} \ {p}_\text{g}(t) = \sum_{v} pp^{\text{v},i}_{vj}(t)), \  \forall t, \label{eq_power_balance_elec}
\end{align}
where ${p}_\text{g}$, the electric power supplied by the grid, equals the total demand from EV charging.  

A typical electricity rate structure for commercial customers consists of energy charge (\$/kWh) and demand charge (\$/kW). The demand charge is based on the peak demand, that is, the highest hourly electricity usage for all time intervals during each billing period. 
We assume that for the selected representative days, there are in total $L$ groups of season and weekday/weekend types with distinct demand charge rates, and $p^\text{pk}_l$ denotes the peak power consumption in the $l$-th group. We have,
\begin{align}
    &p^\text{pk}_l  \geq p_{\text{g}}(t)
    ,\ \forall l, \ \forall t\in \text{group}~l.
\end{align}

Furthermore, the following engineering constraint is enforced, 
\begin{align}
    &0 \leq {{p}}_\text{g}(t) \leq \overline{p}_{\text{g}}, \label{eq_cap_g_e} 
\end{align}
where $\overline{p}_{\text{g}}$ is a parameter that governs the maximum power that can be supplied from the energy distribution system. Alternatively, $\overline{p}_{\text{g}}$ could be treated as a decision variable to model the potential upgrade of the distribution system infrastructure. We do not consider the possibility of feeding energy back to the grid for now, leading to $p_{\text{g}}(t)\geq 0$ in \eqref{eq_cap_g_e}.

\section{Optimization Formulation}\label{sec:opt}
Two types of optimization problems are considered. The planning problem aims to find the least-cost investment portfolio for EV fleets and energy infrastructures that meet electrified transportation demand. The operation problem determines the most cost-effective EV charging and dispatching schedule given the existing fleet and energy infrastructure.

The objective function $J_\text{obj}$ is defined as the summation of investment costs and operational costs,
\begin{align}
    J_\text{obj} & =\sum_i (N^{\text{v}}_i c^{\text{v}}_i)
    +\sum_{j} (N^{\text{c}}_j c^{\text{c}}_j) 
+ \sum_l ( p^{\text{pk}}_{l} c^{\text{pk}}_{l})   
\nonumber&&\\
& \hspace{-0.1cm} + \sum_s \sum_{t\in T_s} (p_{\text{g}}(t) \Delta T c_\text{g}(t)  S_s) + 
 \sum_i  \sum_k D_{k} b_{ki} c^{\text{m}}_i ,\label{eq_objective_plan}
\end{align}
where $c^{\text{v}}_i$, $c^{\text{c}}_j$, $c^{\text{pk}}_{l}$, $c_\text{g}(t)$, and $c^{\text{m}}_i$ are the cost parameters for EVs, chargers, demand charge, energy price, and vehicle maintenance, respectively.
The first two terms in $J_\text{obj}$ are the capital costs for EVs and chargers, which will be a constant in operational problems. $p^{\text{pk}}_{l} c^{\text{pk}}_{l}$ denotes the demand charge and $p_{\text{g}}(t) \Delta T c_\text{g}(t)  S_s$ denotes the energy cost for the electricity supplied by the grid. Note that with a slight abuse of notation, we use $S_s$ to denote how many type $s$ representative days are there, for example, in the time horizon of one year, and $T_s$ to denote the hours in those days. $D_{k} b_{ki} c^{\text{m}}_i$ is the fleet maintenance cost. 

\subsection{Planning Problem}
When determining the investment plan for EV fleets and charging infrastructure, we make the following assumption.
\begin{assumption}
There exists a type of vehicle available for purchase with a range longer than the longest trip block.
Furthermore, there is a type of charger available for purchase that has sufficiently high charging power capacity when paired with the purchased vehicle to charge it to high enough energy levels to satisfy the block schedules.\label{assump_range}
\end{assumption}
These assumptions are reasonable when converting a practical commercial fleet to EVs. Several EV products are available on the market that have a long enough range for public transit or freight delivery needs in an urban setting. Besides, the trip blocks are normally scheduled considering shifts for drivers and allow ample time for charging in between trips. In practice, a significant portion of the charging is performed during the previous night to prepare EVs for the next day's trips. 
Under Assumption~\ref{assump_range}, $\mathcal{P}_1$ and $\mathcal{P}_2$ that will be defined in the following are always feasible due to the trivial solution of procuring a vehicle and charger for every trip block.

\subsubsection{Integer-clustering formulation} 
The planning problem with the integer-clustering formulation is given as follows,
\begin{align}
&\hspace{-1.6cm}  (\mathcal{P}_1) \ \ \min\limits_{\boldsymbol{X}_
\text{agg}}\ J_\text{obj} \nonumber \\
\mathrm{subject\ to}&\quad \text{Fleet operation}\ \  \eqref{eq_block}-\eqref{eq_vehicle_soe_limit},\eqref{eq_lower} \ (\text{or} \ \eqref{eq_upper}),\nonumber\\
&\quad \text{Energy system}\ \ \eqref{eq_power_balance_elec}-\eqref{eq_cap_g_e}.\nonumber
\end{align}
The vector $\boldsymbol{X}_\text{agg}$ contains all decision variables. Investment-related decision variables include $N_i^{\text{v}}$ and $N_j^{\text{c}}$. Operation-related decision variables include $b_{ki}$, $n_{i}(t)$, $m_{ij}(t)$, $p^{\text{v}}_{i}(t)$, $soe_{i}(t)$, $d_{ki}$, ${p}_\text{g}(t)$, $p^\text{pk}_{l}$.

\subsubsection{Individual formulation} 
We consider the planning problem with individual formulation to serve as a benchmark for demonstrating the performance of the integer-clustering formulation. The idea is to incorporate constraints in Sections~\ref{sec:models_integer},~\ref{sec:models_individual}, and \ref{sec:models_disaggregate} in a single optimization problem,
\begin{align}
  (\mathcal{P}_2) \ \ \min\limits_{\boldsymbol{X}_\text{agg},y_v^i, \boldsymbol{X}_\text{indiv}}  & \ J_\text{obj} \nonumber \\
\mathrm{subject\ to} & \nonumber\\
&\hspace{-1cm} \text{Fleet operation}\ \  \eqref{eq_block}-\eqref{eq_vehicle_soe_limit}, \ \eqref{eq_lower}\ (\text{or}\ \eqref{eq_upper}),\nonumber \\
%%------------------
&\hspace{-1cm} \text{Individual vehicles}\ \  \eqref{eq_new_indi}-\eqref{eq_soev_boundary_disag},\ 
\eqref{eq_lower_disag}\ (\text{or}\ \eqref{eq_upper_disag}),\nonumber\\
%%------------------
&\hspace{-1cm} \text{Fleet operations matching}\ \ \eqref{eq_block_disag}-\eqref{eq_vehicle_soe_limit_disag}.\nonumber\\
%%------------------
&\hspace{-1cm} \text{Energy system}\ \ \eqref{eq_power_balance_elec}-\eqref{eq_cap_g_e}.\nonumber
\end{align}
The decision variables are $\boldsymbol{X}_\text{agg}$, $y_v^i$ and $\boldsymbol{X}_\text{indiv}$, where $\boldsymbol{X}_\text{agg}$ is as defined in $\mathcal{P}_1$ and $\boldsymbol{X}_\text{indiv}$ contains $bb_{kv}$, $x^i_{vj}(t)$, $pp^{\text{v}}_{vj}(t)$, $dd_{kv}$, and $soev_{v}(t)$. Note for $\mathcal{P}_2$, for each type $i$ the vehicle index $v$ ranges from 1 to $K$, since $K$ is the ex-ante upper bound on how many vehicles may need to be purchased.
% ; while $N^\text{c}_j$ is chosen heuristically to be large enough to cover all charging requirements. 

\subsection{Operation Problem}
% Due to space limits, we only provide the operation problem with the disaggregation formulation. 
Given an optimal solution $\boldsymbol{X}^{*}_\text{agg,1}$ from the integer-clustering planning problem $\mathcal{P}_1$, for the sake of finding an exact individually-feasible disaggregation, or showing the lack thereof, we consider the feasibility problem $\mathcal{P}_3$,
\begin{align}
&\hspace{-1.6cm}  (\mathcal{P}_3) \ \ \text{find} \ \ \ {\boldsymbol{X}_\text{agg}},{\boldsymbol{X}_\text{indiv}}\ \nonumber \\
\mathrm{subject\ to} \nonumber\\
&\hspace{-1cm} \text{Planning coupling}\ \ \boldsymbol{X}_\text{agg} = \boldsymbol{X}^{*}_\text{agg,1},\nonumber\\
%% ------------------
&\hspace{-1cm}
\text{Individual vehicles}\ \  \eqref{eq_compatible_disag}-\eqref{eq_soev_boundary_disag},\ 
\eqref{eq_lower_disag}\ (\text{or}\ \eqref{eq_upper_disag}),\nonumber\\
%%------------------
&\hspace{-1cm} \text{Fleet operations matching}\ \ \eqref{eq_block_disag}-\eqref{eq_vehicle_soe_limit_disag}.\nonumber\\
%%------------------
&\hspace{-1cm} \text{Energy system}\ \ \eqref{eq_power_balance_elec}-\eqref{eq_cap_g_e},\nonumber
\end{align}
where the fleet-related decision variables $\boldsymbol{X}_\text{agg}$ (c.f. $\mathcal{P}_1$) are fixed to their respective optimal solution values from $\mathcal{P}_1$. The vector $\boldsymbol{X}_\text{indiv}$ contains individual vehicles' operational decision variables (c.f. $\mathcal{P}_2$). For $\mathcal{P}_3$, the vehicle index set is constructed according to the aggregated sizing solution, namely type $i$'s vehicle index $v$ ranges from 1 to $N^{\text{v}*}_i$. Moreover, by construction, equations \eqref{eq_new_indi} and \eqref{eq_new2_indi} are satisfied since there are exactly $N^\text{v}_i$ variables constructed.

Instead of strictly following the aggregated operational charging profile, we can consider the disaggregation optimization problem, referred to as $\mathcal{P}_4$, to determine the optimal operation for individual vehicles while fixing only the planning investments (for vehicles and chargers) and vehicle-type block assignments according to those decided by $\boldsymbol{X}_{\text{agg},1}^*$,
\begin{align}
&\hspace{-1.6cm}  (\mathcal{P}_4) \ \ \min\limits_{\boldsymbol{X}_\text{agg}, \boldsymbol{X}_\text{indiv}}   \  J_\text{obj} \nonumber \\
\mathrm{subject\ to} & \nonumber\\
&\hspace{-1cm} \text{Planning coupling}\ \ \{N^{\text{v}},N^{\text{c}},b\} = \text{Proj}(\boldsymbol{X}_{\text{agg},1}^*) \nonumber\\
%%------------------
&\hspace{-1cm} \text{Individual vehicles}\ \  \eqref{eq_compatible_disag}-\eqref{eq_soev_boundary_disag},\ 
\eqref{eq_lower_disag}\ (\text{or}\ \eqref{eq_upper_disag}),\nonumber\\
%%------------------
&\hspace{-1cm} \text{Fleet operations matching}\ \ \eqref{eq_block_disag}-\eqref{eq_vehicle_soe_limit_disag}.\nonumber\\
%%------------------
&\hspace{-1cm} \text{Energy system}\ \ \eqref{eq_power_balance_elec}-\eqref{eq_cap_g_e},\nonumber
\end{align}
where ``Proj'' is a function that projects the vector $\boldsymbol{X}_{\text{agg},1}^*$ to its components $N_i^\text{v}$, $N_j^\text{c}$, $b_{ki}$, which are now fixed constants in $\mathcal{P}_4$. Similarly to $\mathcal{P}_3$, the vehicle index is constructed to be from 1 to $N^{\text{v}*}_i$. Note that for $\mathcal{P}_4$, other than the fixed vehicle assignments, the remaining $\boldsymbol{X}_{\text{agg}}$ decisions consist of trivial aggregations $d,m,p,soe$ which are no longer coupled to the $\mathcal{P}_1$ solution. Such a freedom to deviate from the $\mathcal{P}1$ charging schedules implies that $\mathcal{P}4$'s feasible region contains $\mathcal{P}3$'s feasible region.

\section{Theoretical Bounds on Performance}
%%%%%%%%%%%%%%%%%%%%%%%%%%%%%%%%%%%%%%%%%%%%%%%%%%%%%%%%%%%%%%%%%%%%%%%%%%%%%%
In this section, we provide theoretical analysis and establish performance guarantees for the proposed method. We give both an upper and lower bound for the true individual formulation using the integer-clustering model.
\begin{theorem} 
\label{Th1} 
Let $\underline{J}_1^*$ and $\overline{J}_1^*$ be the optimal values of problem $\mathcal{P}_1$ with constraint \eqref{eq_lower} and \eqref{eq_upper}, respectively. Let $\underline{J}_2^*$ and $\overline{J}_2^*$ be the optimal values of problem $\mathcal{P}_2$ with constraints \eqref{eq_lower},\eqref{eq_lower_disag} and \eqref{eq_upper},\eqref{eq_upper_disag}, respectively. 
Let $\underline{J}_4^*$ and $\overline{J}_4^*$ be the optimal values of problem $\mathcal{P}_4$ with constraint \eqref{eq_lower_disag} and \eqref{eq_upper_disag}, respectively. 
Then $\underline{J}_1^* \leq \underline{J}_2^* \leq \underline{J}_4^*$ and $\overline{J}_1^* \leq \overline{J}_2^* \leq \overline{J}_4^*$.

Moreover, consider the disaggregation feasibility problem $\mathcal{P}_3$ that corresponds to $\mathcal{P}_1$. If $\mathcal{P}_3$ is feasible, then $\underline{J}_1^*=\underline{J}_2^*=\underline{J}_4^*$ and $\overline{J}_1^* =\overline{J}_2^*=\overline{J}_4^*$.

Finally, $\underline{J}_1^* \leq \overline{J}_1^*$, $\underline{J}_2^* \leq \overline{J}_2^*$, and $\underline{J}_4^* \leq \overline{J}_4^*$.
\end{theorem}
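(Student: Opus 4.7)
The plan is to establish every inequality and equality in Theorem~\ref{Th1} by exhibiting an explicit feasibility-preserving transformation between the three problems, so that each claim reduces to a transparent relaxation-or-restriction argument together with the observation that $J_\text{obj}$ depends only on $\boldsymbol{X}_\text{agg}$. For the chain $\underline{J}_1^* \leq \underline{J}_2^* \leq \underline{J}_4^*$, I view $\mathcal{P}_1$ as the projection of $\mathcal{P}_2$ onto $\boldsymbol{X}_\text{agg}$ and $\mathcal{P}_4$ as the restriction of $\mathcal{P}_2$ in which the investments and assignments are clamped to the $\mathcal{P}_1^*$ values. The left inequality is immediate: every $\mathcal{P}_2$-feasible point already contains an aggregate component satisfying \eqref{eq_block}--\eqref{eq_vehicle_soe_limit} together with the chosen $d$-constraint, so projecting yields a $\mathcal{P}_1$-feasible point of identical cost. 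Both variants, with \eqref{eq_lower} and with \eqref{eq_upper}, go through verbatim.

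The right inequality $\underline{J}_2^* \leq \underline{J}_4^*$ requires more care because $\mathcal{P}_4$ drops the explicit aggregate fleet constraints. Given any $\mathcal{P}_4$-feasible point, I would embed it into the $\mathcal{P}_2$ variable space as follows: copy the investments, assignments, and individual-level variables for $v \leq N^{\text{v}*}_i$; set $y^i_v=1$ for these indices and pad the extra indices $v > N^{\text{v}*}_i$ with $y^i_v=0$ and all other individual variables zero (so \eqref{eq_new_indi}--\eqref{eq_new2_indi} hold); and then \emph{define} the remaining aggregate quantities $(n_i, m_{ij}, p^\text{v}_i, soe_i, d_{ki})$ via the matching equalities \eqref{eq_block_disag}--\eqref{eq_vehicle_soe_limit_disag}. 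It suffices to check that the aggregate constraints \eqref{eq_block}--\eqref{eq_vehicle_soe_limit} and the chosen $d$-constraint hold for these derived quantities. Most follow by summing the corresponding per-vehicle constraints over $v$; for instance, summing \eqref{eq_enroute_disag} over the purchased indices and invoking \eqref{eq_block_disag} and \eqref{eq_vehicle_charging_disag} gives $\sum_j m_{ij}(t) \leq N^\text{v}_i - \sum_k A_k(t) b_{ki} = n_i(t)$, recovering \eqref{eq_vehicle_depot}, while summing \eqref{eq_vehicle_soe_disag}, \eqref{eq_soev_boundary_disag}, and \eqref{eq_soev_disag} reproduces \eqref{eq_vehicle_soe}, \eqref{eq_vehicle_soe_ic}, and \eqref{eq_vehicle_soe_limit}. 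Because $J_\text{obj}$ depends only on $\boldsymbol{X}_\text{agg}$, the lifted point has identical cost, yielding $\underline{J}_2^* \leq \underline{J}_4^*$; the over-bar case is identical.

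For the conditional equalities, assume $\mathcal{P}_3$ is feasible and take its witness together with an optimal $\mathcal{P}_1$ solution. Their concatenation is simultaneously $\mathcal{P}_2$-feasible (the aggregate side inherits $\mathcal{P}_1$'s constraints, the individual side inherits $\mathcal{P}_3$'s, and the matching identities hold by definition of $\mathcal{P}_3$) and $\mathcal{P}_4$-feasible (the planning coupling reduces to setting $(N^\text{v},N^\text{c},b)$ to the $\mathcal{P}_1^*$ values, which is exactly what $\mathcal{P}_3$ already does). Since $J_\text{obj}$ is a function of $\boldsymbol{X}_\text{agg}=\boldsymbol{X}_{\text{agg},1}^*$ alone, this common witness has cost $\underline{J}_1^*$, forcing $\underline{J}_2^* \leq \underline{J}_1^*$ and $\underline{J}_4^* \leq \underline{J}_1^*$; combined with the first chain this yields three-way equality, and the over-bar case is identical. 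Finally, $\underline{J}_\bullet^* \leq \overline{J}_\bullet^*$ for each of $\mathcal{P}_1,\mathcal{P}_2,\mathcal{P}_4$ is immediate because \eqref{eq_upper} (resp.\ \eqref{eq_upper_disag}) is pointwise stricter than \eqref{eq_lower} (resp.\ \eqref{eq_lower_disag}), so every upper-variant feasible point is lower-variant feasible at identical cost. The principal obstacle throughout is the term-by-term verification in the second paragraph, particularly for the SOE dynamics, which relies crucially on the matching constraints being equalities so that summation over $v$ telescopes cleanly into the aggregate equations.
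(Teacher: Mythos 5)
Your proof follows the paper's argument essentially step for step: project a $\mathcal{P}_2$-feasible point onto $\boldsymbol{X}_\text{agg}$ to get $\underline{J}_1^*\leq\underline{J}_2^*$, lift a $\mathcal{P}_4$-feasible point into the $\mathcal{P}_2$ variable space (defining the aggregates via the matching equalities \eqref{eq_block_disag}--\eqref{eq_vehicle_soe_limit_disag} and padding the unused $y^i_v$) to get $\underline{J}_2^*\leq\underline{J}_4^*$, use the $\mathcal{P}_3$ witness concatenated with the optimal $\mathcal{P}_1$ solution for the equality case, and invoke the relative restrictiveness of \eqref{eq_upper}/\eqref{eq_upper_disag} for the final chain. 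The one caveat is that your claim that summing \eqref{eq_soev_disag} ``reproduces'' \eqref{eq_vehicle_soe_limit} in fact only yields $soe_i(t)\leq R_i N^{\text{v}}_i$ rather than $R_i\,n_i(t)$ (an en-route vehicle can leave a positive residual $soev^i_v$ behind when $dd^i_{kv}$ is strictly between the trip requirement and its stored energy), so that sub-step would need a further adjustment of the $dd$ variables; however, the paper's own proof asserts the same implication without verification, so your argument is no less complete than the published one.
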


\begin{proof} \label{Pf1} 
Note that $\mathcal{P}_2$ is always feasible under Assumption~\ref{assump_range}.
Let $\hat{\boldsymbol{X}}_2=$($\hat{\boldsymbol{X}}_\text{agg}$, $\hat{y}_v^i$, $\hat{\boldsymbol{X}}_\text{indiv})$ be a feasible solution of $\mathcal{P}_2$ with constraint \eqref{eq_lower},\eqref{eq_lower_disag} (respectively, \eqref{eq_upper},\eqref{eq_upper_disag}). 
Define $\hat{\boldsymbol{X}}_1=\hat{\boldsymbol{X}}_\text{agg}$.
Because the constraints of $\mathcal{P}_2$ include those of $\mathcal{P}_1$, we know that $\hat{\boldsymbol{X}}_1$ is a feasible solution of $\mathcal{P}_1$ with constraint \eqref{eq_lower} (respectively, \eqref{eq_upper}). Therefore, $\underline{J}_1^* \leq \underline{J}_2^*$ and $\overline{J}_1^* \leq \overline{J}_2^*$.

To show that $\underline{J}_2^* \leq \underline{J}_4^*$ and $\overline{J}_2^* \leq \overline{J}_4^*$, we analyze the constraints in $\mathcal{P}_2$ and $\mathcal{P}_4$. First, if $\mathcal{P}_4$ is infeasible, these inequalities automatically hold. When $\mathcal{P}_4$ is feasible, 
we first notice that adding constraints \eqref{eq_block}-\eqref{eq_vehicle_soe_limit},\eqref{eq_lower}(\text{or} \eqref{eq_upper}) 
to $\mathcal{P}_4$ (with design parameters $N^{\text{v}},N^{\text{c}},b$ fixed to the corresponding values in $\hat{\boldsymbol{X}}_1$) does not reduce and therefore does not change the feasible region of $\mathcal{P}_4$.
Besides, adding constraints \eqref{eq_new_indi} and \eqref{eq_new2_indi} to $\mathcal{P}_4$ does not reduce and therefore does not change the feasible region of $\mathcal{P}_4$ either, because $N^{\text{v}}_i$ is a parameter and \eqref{eq_new_indi} and \eqref{eq_new2_indi} are trivially satisfied in this case. 
We denote the revised equivalent problem that contains the additional constraints as $\mathcal{P}_4'$. 
Comparing back to $\mathcal{P}_2$, it is clear that they have the same set of constraints except that $\mathcal{P}_4'$ has an additional constraint that sets $N^{\text{v}},N^{\text{c}},b$ to be constants. As a result, take any feasible solution of $\mathcal{P}_4'$ (thereby of $\mathcal{P}_4$), it is also feasible for $\mathcal{P}_2$, which gives $\underline{J}_2^* \leq \underline{J}_4^*$ and $\overline{J}_2^* \leq \overline{J}_4^*$.

Next, we prove the second statement on the equality result. 
Under Assumption~\ref{assump_range}, problem $\mathcal{P}_1$ is always feasible. Let $\tilde{\boldsymbol{X}}_1$ be a feasible solution of $\mathcal{P}_1$ with constraint \eqref{eq_lower} (respectively, \eqref{eq_upper}). Consider the corresponding disaggregation problem $\mathcal{P}_3$. When $\mathcal{P}_3$ is feasible, denote its feasible solution by $\tilde{\boldsymbol{X}}_3=(\tilde{\boldsymbol{X}}_1,{\tilde{\boldsymbol{X}}_\text{indiv}})$. Further define $y^{i}_{v}=1$ if $1\leq v\leq \tilde{N}^{\text{v}}_i$ and $y^{i}_{v}=0$ for $\tilde{N}^{\text{v}}_i < v \leq K$. 
Then
$(\tilde{\boldsymbol{X}}_1,y^{i}_{v},{\tilde{\boldsymbol{X}}_\text{indiv}})$ would be a feasible solution to $\mathcal{P}_2$. As a result, $\underline{J}_2^* \leq \underline{J}_1^*$ and $\overline{J}_2^* \leq \overline{J}_1^*$. Therefore, $\underline{J}_2^* = \underline{J}_1^*$ and $\overline{J}_2^* = \overline{J}_1^*$.

Define $\tilde{\boldsymbol{X}}_4=\tilde{\boldsymbol{X}}_3$, it is obvious that $\tilde{\boldsymbol{X}}_4$ is feasible for $\mathcal{P}_4$. As a result, $\underline{J}_4^* \leq \underline{J}_1^*$ and $\overline{J}_4^* \leq \overline{J}_1^*$. Therefore, $\underline{J}_4^* = \underline{J}_1^*$ and $\overline{J}_4^* = \overline{J}_1^*$. The result is proven.

Finally, to show that $\underline{J}_1^* \leq \overline{J}_1^*$, $\underline{J}_2^* \leq \overline{J}_2^*$, and $\underline{J}_4^* \leq \overline{J}_4^*$, we simply need to notice that the formulations with constraints \eqref{eq_lower} and \eqref{eq_lower_disag} are less restrictive than their counterparts with constraints \eqref{eq_upper} and \eqref{eq_upper_disag}, respectively. 
\end{proof}

Since constraints in the integer-clustering formulation for the aggregated vehicle state-of-energy and charging power are less restrictive compared to the individual formulation, there might not exist a feasible decomposition for individual vehicles given a result from the integer-clustering formulation. However, we always have both an upper bound (by $\mathcal{P}_4$, upon adding slack variables to investment) and a lower bound (by $\mathcal{P}_1$) for the individual formulation.
If however a feasible decomposition exists, the optimal result from the integer-clustering formulation gives the optimal solution for the individual formulation. We can exploit this property to take advantage of the computational efficiency of the integer-clustering formulation yet obtain the true optimal solution that would have been produced by the individual formulation (which may require unrealistic computation time or computing power to solve).
%%%%%%%%%%%%%%%%%%%%%%%%%%%%%%%%%%%%%%%%%%%%%%%%%%%%%%%%%%%%%%%%%%%%%%%%%%%%%%

\section{Case Study and Numerical Results}\label{sec:numerical}
%%%%%%%%%%%%%%%%%%%%%%%%%%%%%%%%%%%%%%%%%%%%%%%%%%%%%%%%%%%%%%%%%%%%%%%%%%%%%%
The performance and computational efficiency of the developed tools are demonstrated through a real-world application to the public transit bus system of the city of Boston. 

The optimization models were implemented in JuMP using Julia 1.9.2 and solved using Gurobi 10.0. The numerical experiments were conducted on the MIT SuperCloud using Intel Xeon
Platinum 8260 processors each with 48 physical cores \cite{reuther2018interactive}.

\subsection{Fleet Data: MBTA Transit Schedule}\label{subsec:fleet_data}
Transit agencies publish detailed schedule information using the General Transit Feed Specification (GTFS). We use schedules published by the Massachusetts Bay Transportation Authority (MBTA) \cite{mbta-gtfs}, focusing on the Cabot bus depot which serves major city routes. Figure~\ref{fig:cabot_schedules_a}\,(a) shows the geographical location of the Cabot depot and the routes supported by the depot. Figure~\ref{fig:cabot_schedules_a}\,(b) demonstrates the block schedules for the depot on a representative weekday in the Fall season. 
The blue lines indicate the start and end hours of each trip block. It can be seen that the block schedules are complex and represent operation timescale information that should be included in the investment planning. 

\begin{figure}[t]
\begin{center}
\begin{picture}(245.0, 95.0)
%%%
\put(  -7,  0){\epsfig{file=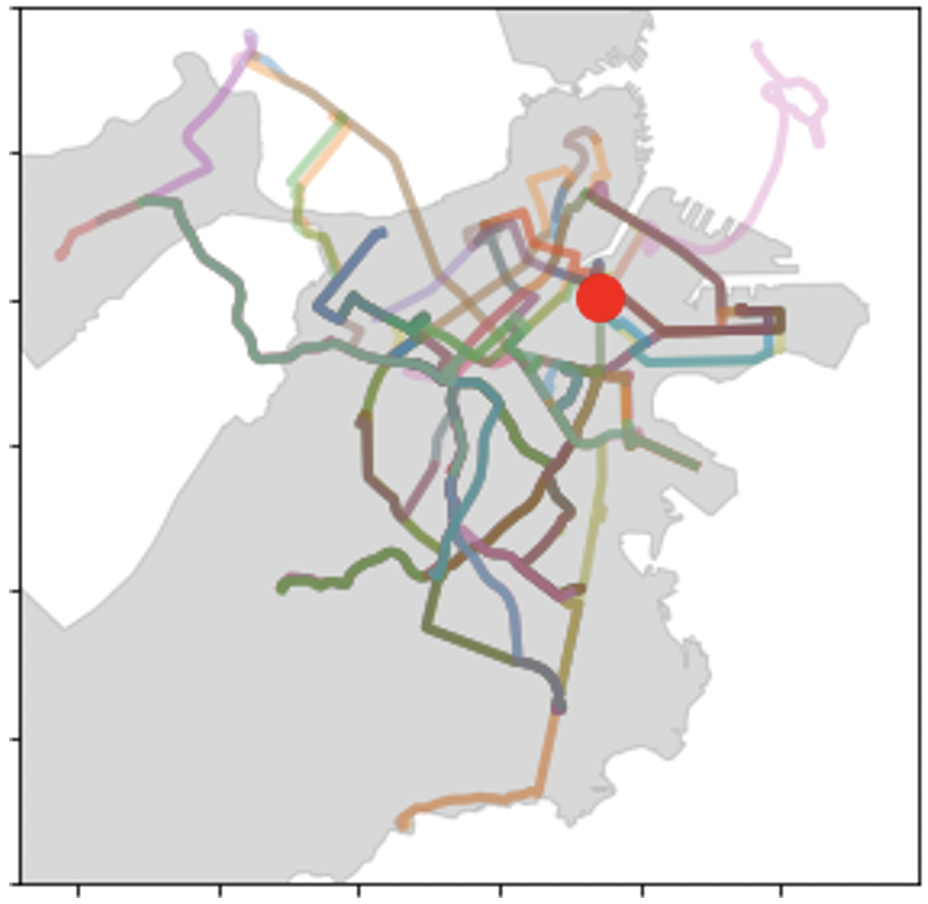,height=.176\textwidth,width=.170\textwidth}}      
\put(  94,  -2){\epsfig{file=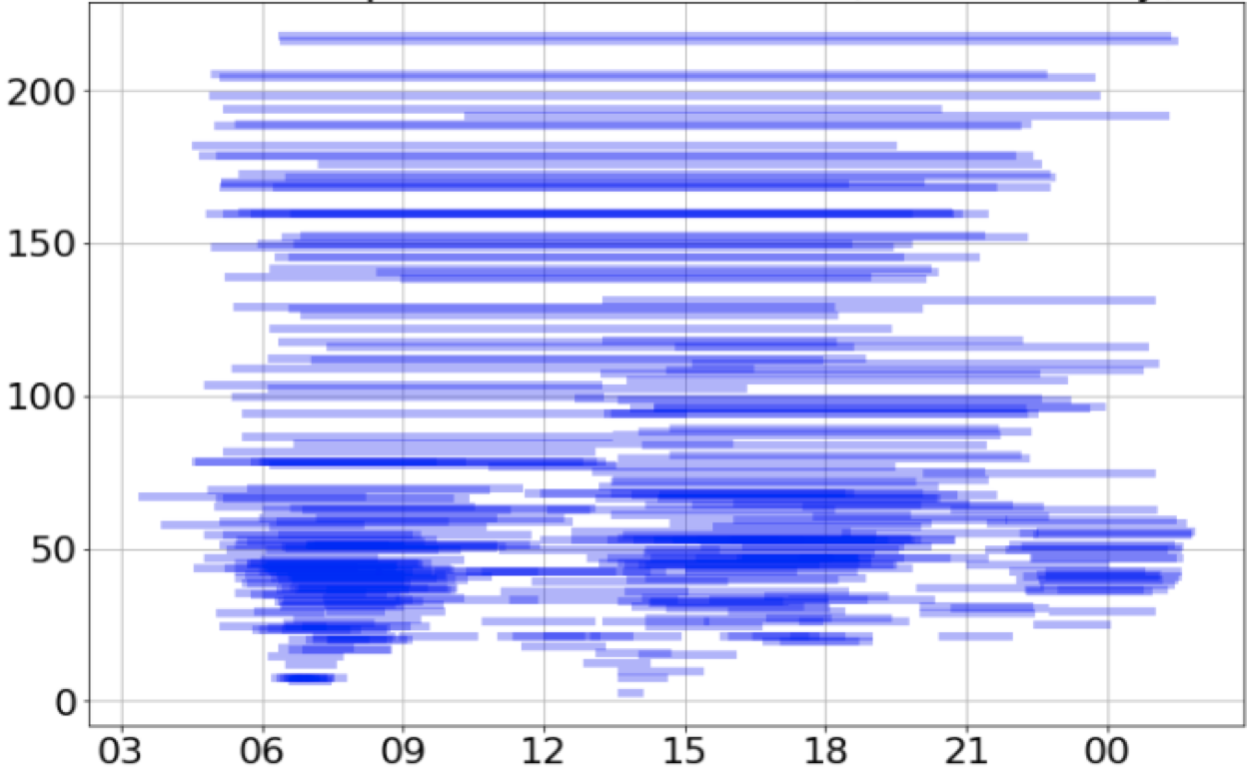,height=.180\textwidth}}  %%%

\put( 87, 8){\small\rotatebox{90}{Block distance (km)}}
\put( 143, -9){\small{Time index (hour)}}

\put( -3, 81.5){\small{(a)}}
\put( 106, 81.5){\small{(b)}}
\end{picture}
\end{center}
      \caption{\small{(a) Routes supported by the Cabot bus depot. (b) Block schedules for the Cabot bus depot on a representative weekday in the Fall season.}}\label{fig:cabot_schedules_a}
       \vspace{-0.05cm}
\end{figure}

\subsection{Parameters for EV, Charger, and Energy System}\label{subsec:vehicle_para} 
In this study, we consider two EVs from the Proterra ZX5 product line, featuring short range and long range, respectively. Their economic and technical parameters are summarized in Table~\ref{table:EV_parameter}.
% However, it turns out the third EV cannot compete with the other viehicle types and is therefore ruled out in later analysis. 
Three types of DC fast chargers with different charging levels are evaluated too (Table~\ref{table:charger_parameter}).

The retail electricity energy price is assumed to be 13.2~cents/kWh. Besides, a monthly demand charge structure is enforced, with a higher demand charge in the summer months from June to September (24.09~\$/kW) versus the other months (17.92~\$/kW) \cite{openei}.

\setlength\tabcolsep{0pt}
\begin{table*}[t] 
\centering
\caption{Economic and technical parameters of EVs \cite{hgacbuy}.}\label{table:EV_parameter}
\begin{tabular}{c || c c c c c c }
\Xhline{2\arrayrulewidth}\\[-0.9em]
\small{\ \ Vehicle Type}  \ & \small{\ Energy capacity\ \ } & \small{\ Range\ }\  & \small{\ \ Capital cost\ \ } & \small{\ Maintenance cost\ \ } & \small{\ Full charge time\ \ } & \small{\ Life time\ } \\[-0.9em]\\
\Xhline{2\arrayrulewidth}\\[-0.6em]
\small{Proterra ZX5 BEV (short-range)}\  & 225 kWh & 106 miles & \$800,000 & \$0.64/km & 0.45 - 4.5 hours & 12 years  \\[-1em]\\ \hline\\[-0.6em]
\small{Proterra ZX5+ BEV (long-range)} \ & 450 kWh & 197 miles  & \$821,944 & \$0.64/km & 0.45 - 4.5 hours  & 12 years  \\[-1em]\\
\Xhline{2\arrayrulewidth}
\end{tabular}
\end{table*}

\setlength\tabcolsep{0pt}
\begin{table*}[t] 
\centering
\caption{Economic and technical parameters of DC fast chargers\cite{ kushwah2021techno}.}\label{table:charger_parameter}
\begin{tabular}{c || c c c c }
\Xhline{2\arrayrulewidth}\\[-0.9em]
\small{\ \ Charger Type}  \ & \small{\ Power rating\ \ } & \small{\ Capital cost\ }\  & \small{\ \ Installation cost\ \ } & \small{\ Life time\ } \\[-0.9em]\\
\Xhline{2\arrayrulewidth}\\[-0.6em]
\small{\ Level 3 DC fast charger \ } & 50 kW & \$37,000 & \$22,626 & 28 years  \\[-1em]\\ \hline\\[-0.6em]
\small{\ Level 4 DC fast charger\ } & 150 kW & \$45,000 & \$22,626 & 28 years  \\[-1em]\\ \hline\\[-0.6em]
\small{\ Level 5 DC fast charger\ } & 500 kW & \$349,000 & \$250,000 & 28 years  \\[-1em]\\ 
\Xhline{2\arrayrulewidth}
\end{tabular}
\end{table*}

\subsection{Experimental Setup}
 We study a setting of $I=2$ (short- and long-range EV types), $J=3$ (low-, medium-, and high-speed chargers). Data subsamples of varying sizes were generated to demonstrate the proposed method's scalability. For subsample windows $\omega$ ranging from 500 to 3, we select every $\omega$-th block (arranged chronologically), with the largest subsample data containing $K=590$ blocks. Larger subsets were not pursued in this study due to the benchmark $\mathcal{P}_2$ model's intractability.
 
 For each test case, we compare the performance of the individual model (i.e., $\mathcal{P}_2$) versus the proposed integer-clustering plus disaggregation method (i.e., $\mathcal{P}_1, \mathcal{P}_3, \mathcal{P}_4$). All problems are mixed integer linear programming (MILP).
 \\
 \\
\textbf{Individual model} $\mathcal{P}_2$: The individual model is implemented using a vehicle index $v$ (for each vehicle type $i$) ranging from 1 to the aforementioned upper bound $K$.
\\
\textbf{Cluster-disaggregate approach}: 
\begin{itemize}
    \item Solve  $\mathcal{P}_1$, and save optimal solution $\boldsymbol{X}_{\text{agg},1}^*$.
    \item Fix $\boldsymbol{X}_{\text{agg},1}^*$ to solve  $\mathcal{P}_3$
    \item Fix $\boldsymbol{X}_{\text{agg},1}^*$ to solve  $\mathcal{P}_4$.
\end{itemize}

\subsection{{Results: Accuracy of Integer-Clustering Formulation}}

\subsubsection{\underline{Feasible $\mathcal{P}_3$  implies 0 gap}} In Table \ref{table:p3_feas}, a case of $K=24$ resulted in a feasible $\mathcal{P}_3$; simultaneously, the $\mathcal{P}_2$ planning solution and objective exactly matches that of $\mathcal{P}_1$. The objectives of the optimization problems exaclty match: $J_1^* = J_2^* = J_4^*$. Figure \ref{fig:k_vs_gap} extends across all cases, and shows that when the $\mathcal{P}_3$ disaggregation is feasible (indicated by red points on the plot), the resulting optimal value exactly matches $J_1^* = J_2^* = J_4^*$ (the latter equality was also validated for these cases). This numerically validates the equality case of Theorem 1.
\\
\subsubsection{\underline{Optimality gaps are reasonably small}} Regardless of the $P_3$ feasibility problem status, we can attempt to solve the $P_4$ disaggregation optimization problem. Across all cases, the $J_4^*/J_1^*$ gap, which is an upper bound to the $J_2^*/J_1^*$ gap, were empirically found to be within 0.5\% as seen in Figure \ref{fig:k_vs_gap}. Even when the exact $P_3$ is infeasible, such as reported in Table \ref{table:p3_infeas}, $P_4$ can be calculated at relatively low computational burden in order to certify that the obtained individually-feasible solution is within 0.007\% of optimality. In this case, it turns out that $J_1^*$ also had a zero gap to the true optimal value $J_2^*$.  
\\
\subsubsection{\underline{Non-monotonic trend in optimality gaps}} We observe in Figure \ref{fig:k_vs_gap} an overall non-monotonic trend in the accuracy gap as the block number increases: below 8 blocks, the aggregation is always exact; at medium number of blocks there are cases with larger gaps up to 0.4\% (although most cases have low gaps as seen in the dot density); finally as trip blocks increase further (beyond $\sim$100 blocks) there is an overall trend of tightening accuracy. Locally weighted scatterplot smoothing (LOWESS) is used to visualize this trend. At a single vehicle, the aggregated model is trivially equivalent to the individual model. Future work could more closely examine the drivers for increased accuracy of integer clustering at larger block sizes.

\begin{table}[t]
\centering
\caption{$K=24$ ($\omega=76$; allow surplus charging)}
\begin{tabular}{c || c    c c c}
\Xhline{2\arrayrulewidth}\\[-0.9em]
\ \ \ Problem\ \ \ &\ \  \ \ \ \ \ $\mathcal{P}_2$ indiv.\ \ \ \ \ \ \ \ \ & \ \ \ \ $\mathcal{P}_1$ agg.\ \ \ \ \ &\  $\mathcal{P}_3$ feas.\ & \ \ \ $\mathcal{P}_4$ disagg.\ \ \  \\[-0.9em]\\
\Xhline{2\arrayrulewidth}\\[-0.6em]
Feasible & Yes & Yes & Yes & Yes \\[-1em]\\ \hline\\[-0.6em]
$[N^{\text{v}}]$ & $1,1$ & $1,1$ & -- & -- \\[-1em]\\ \hline\\[-0.6em]
$[N^{\text{c}}]$ & $0,1,0$ & $0,1,0$ & -- & -- \\[-1em]\\ \hline\\[-0.6em]
$[p^{\text{pk}}]$ & 71.21, \ 74.48 & 71.21, \ 74.48 & -- & 71.21, \ 74.48 \\[-1em]\\ \hline\\[-0.6em]
Optimal $J^*$ & \$154,526.31 & \$154,526.31 & -- & \$154,526.31 \\[-1em]\\ \hline\\[-0.6em]
Gap vs. $J_2^*$ & -- & 0.000\% & -- & 0.000\% \\[-1em]\\ \hline\\[-0.6em]
Time (sec) & 15.289 & 0.909 & 0.021 & 0.065\\[-1em]\\ \Xhline{2\arrayrulewidth}
\end{tabular} \label{table:p3_feas}
\end{table}

\begin{table}[t]
\centering
\caption{$K=85$ ($\omega=21$; allow surplus charging)}
\begin{tabular}{c || c    c c c}
\Xhline{2\arrayrulewidth}\\[-0.9em]
\ \ \ Problem\ \ \ &\ \  \ \ \ \ \ $\mathcal{P}_2$ indiv.\ \ \ \ \ \ \ \ \ & \ \ \ \ $\mathcal{P}_1$ agg.\ \ \ \ \ &\  $\mathcal{P}_3$ feas.\ & \ \ \ $\mathcal{P}_4$ disagg.\ \ \  \\[-0.9em]\\
\Xhline{2\arrayrulewidth}\\[-0.6em]
Feasible & Yes & Yes & No & Yes \\[-1em]\\ \hline\\[-0.6em]
$[N^{\text{v}}]$  & $7,1$ & $7,1$ & -- & -- \\[-1em]\\ \hline\\[-0.6em]
$[N^{\text{c}}]$ & $0,1,0$ & $0,1,0$ & -- & -- \\[-1em]\\ \hline\\[-0.6em]
$[p^{\text{pk}}]$ &  45.48, \ 65.52 & 45.48,\ 65.52 & -- & 47.28,\ 65.52 \\[-1em]\\ \hline\\[-0.6em]
Optimal $J^*$ &  \$586,713.74 & \$586,713.74 & -- & \$586,757.02 \\[-1em]\\ \hline\\[-0.6em]
Gap vs. $J_2^*$ & -- & 0.000\% & -- & 0.007\%\\[-1em]\\ \hline\\[-0.6em]
Time (sec) & 1,847.439 & 0.714 & 0.446 & 0.945\\[-1em]\\ \Xhline{2\arrayrulewidth}
\end{tabular} \label{table:p3_infeas}
\end{table}

\begin{figure}[t]
\begin{center}
\begin{picture}(300.0, 180.0)
%%%
\epsfig{file=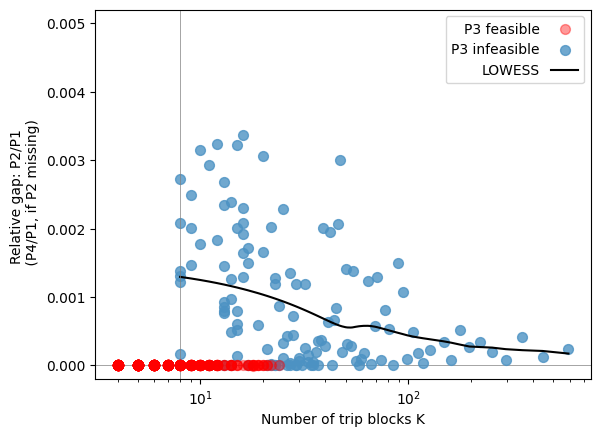,width=.49\textwidth}  %%%
\end{picture}
\end{center}
      \caption{\small{\textbf{Empirical accuracy}: Relative gap of the integer-clustering optimal value $J_1^*$ compared to the true individual optimal value $J_2^*$. If $\mathcal{P}_2$ did not solve within the set 1-hour time limit, the relative gap of $J_4^*$ (which is a valid upper bound) versus $J_1^*$ is reported instead. (Showing cases when surplus energy is allowed.) An outlier of 0.5\% at $K=137$ is omitted. Note that 8 of the $\mathcal{P}_4$ cases were infeasible; but these were solved after allowing an $N^\text{c}$ slack $\leq$ 1.}}
      \label{fig:k_vs_gap}
       \vspace{-0.05cm}
\end{figure}

\subsection{{Results: Computational Efficiency}}
In the Table \ref{table:p3_infeas} case, the individual model took 1,113 times longer than the combined times of $(\mathcal{P}_1,\mathcal{P}_4)$. Figure \ref{fig:k_vs_time} summarizes this speedup across all cases: it shows how the computation time for $\mathcal{P}_2$ scales roughly exponentially with the case block dimension. Meanwhile, the $\mathcal{P}_4$ time (which dominates time needed for the integer-clustering $\mathcal{P}_1$) only increases modestly through the cases, although an uptick in the largest case examined is noticeable. The speedup factor (when using $\mathcal{P}_1,\mathcal{P}_4$) can reach as high as 2000 times, suggesting practical benefits of adopting such an integer-clustering approach. Notably, beyond about $K=100$, $\mathcal{P}_2$ models were not able to solve within 1 hour; in contrast, in these cases, the integer clustering plus disaggregation approach could reliably produce an individually feasible solution with provable optimality guarantees.

\begin{figure}[t]
\begin{center}
\begin{picture}(300.0, 200.0)
%%%
\epsfig{file=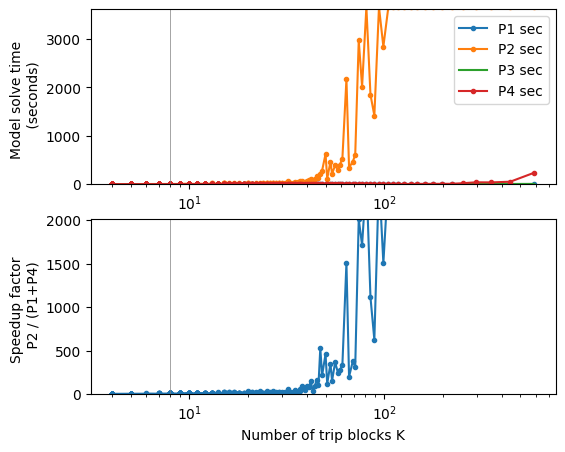,width=.49\textwidth}  %%%
\end{picture}
\end{center}
      \caption{\small{\textbf{Computational efficiency}: Model solution time and speedup factors $T_{\mathcal{P}_2} / (T_{\mathcal{P}_1} + T_{\mathcal{P}_4})$. $\mathcal{P}_2$ models were not reliably solved within the 1-hour time limit at cases larger than $K\sim 100$, and speedup factors are not reported for these cases.}}
      \label{fig:k_vs_time}
       \vspace{-0.05cm}
\end{figure}
%%%%%%%%%%%%%%%%%%%%%%%%%%%%%%%%%%%%%%%%%%%%%%%%%%%%%%%%%%%%%%%%%%%%%%%%%%%%%%

\section{Conclusion}
%%%%%%%%%%%%%%%%%%%%%%%%%%%%%%%%%%%%%%%%%%%%%%%%%%%%%%%%%%%%%%%%%%%%%%%%%%%%%%
This paper proposed an integer-clustering formulation that can efficiently model a large number of EVs to address the computational challenges in integrated grid-transportation studies. The model was then used to develop decision-making tools for optimally planning and/or operating EV fleets and charging infrastructure. 
We conducted a rigorous analysis to reveal the mathematical relationships between the hierarchical formulations of the integer-clustering, disaggregation, and individualized problems. Most importantly, we proved both lower and upper bounds for the true individual solution based on the proposed integer-clustering formulation. 
These theoretical bounds provide guaranteed performance for the proposed method.

We applied the proposed framework to a real-world case study of Boston's public transit network using real geospatial data for bus schedules and actual cost parameters for EVs and charging infrastructure. The proposed integer-clustering approach demonstrated remarkable accuracy and computational performance. When solving real-life scale problem instances, the individual-vehicle formulation using the full set of potential combinatorial variables is shown to be computationally intractable. In contrast, the integer-clustering plus disaggregation method scales well to larger cases, with documented speedups up to 2000x. This computational efficiency is paired with the quantified optimality gap guarantee, which empirically proves to be within 0.5\% for the instances studied.

Future work includes more deeply examining the mechanisms that cause $\mathcal{P}_3$ disaggregation infeasibilities, and in turn, better capturing these constraints in the integer-clustering formulation.
%%%%%%%%%%%%%%%%%%%%%%%%%%%%%%%%%%%%%%%%%%%%%%%%%%%%%%%%%%%%%%%%%%%%%%%%%%%%%%

\bibliographystyle{IEEEtran}
\bibliography{RefTransp}

% \newpage

% \section{Biography Section}
% If you have an EPS/PDF photo (graphicx package needed), extra braces are
%  needed around the contents of the optional argument to biography to prevent
%  the LaTeX parser from getting confused when it sees the complicated
%  $\backslash${\tt{includegraphics}} command within an optional argument. (You can create
%  your own custom macro containing the $\backslash${\tt{includegraphics}} command to make things
%  simpler here.)
 
% \vspace{11pt}

% \bf{If you include a photo:}\vspace{-33pt}
% \begin{IEEEbiography}[{\includegraphics[width=1in,height=1.25in,clip,keepaspectratio]{fig1}}]{Michael Shell}
% Use $\backslash${\tt{begin\{IEEEbiography\}}} and then for the 1st argument use $\backslash${\tt{includegraphics}} to declare and link the author photo.
% Use the author name as the 3rd argument followed by the biography text.
% \end{IEEEbiography}

% \vspace{11pt}

% \bf{If you will not include a photo:}\vspace{-33pt}
% \begin{IEEEbiographynophoto}{John Doe}
% Use $\backslash${\tt{begin\{IEEEbiographynophoto\}}} and the author name as the argument followed by the biography text.
% \end{IEEEbiographynophoto}

\vfill

\end{document}